\newcommand{\KA}{\textit{KA}\,}
\newcommand{\KS}{C}
\newcommand{\KP}{K}
\let\ge=\geqslant
\let\le=\leqslant
\begin{document}
\title{Limit complexities revisited}
\author[lab1]{L.~Bienvenu}{Laurent Bienvenu}
\address[lab1]{Laboratoire d'Informatique Fondamentale\newline
CNRS \& Universit\'e de Provence,
\newline 39 rue Joliot Curie, F-13453 Marseille cedex 13}
\email{Laurent.Bienvenu@lif.univ-mrs.fr}
\author[lab2]{An.~Muchnik}{Andrej Muchnik}
\address[lab2]{Andrej Muchnik (24.02.1958 -- 18.03.2007)\newline
worked
in the Institute of New Technologies in Education, Moscow}
\author[lab3]{A.~Shen}{Alexander Shen}
\address[lab3]{Laboratoire d'Informatique Fondamentale,\newline
Poncelet Laboratory, CNRS, IITP RAS, Moscow}
\email{Alexander.Shen@lif.univ-mrs.fr}
\author[lab4]{N.~Vereshchagin}{Nikolay Vereshchagin}
\address[lab4]{Moscow State Lomonosov University, Russia}
\email{ver@mccme.ru}

\keywords{Kolmogorov complexity, limit complexities, limit
frequencies, 2-randomness, low basis}


\begin{abstract}
The main goal of this paper is to put some known results in a
common perspective and to simplify their proofs.

We start with a simple proof of a result from~\cite{ver} saying
that $\limsup_{n}\KS(x|n)$ (here $\KS(x|n)$ is conditional
(plain) Kolmogorov complexity of $x$ when $n$ is known) equals
$\KS^{\mathbf{0}'}(x)$, the plain Kolmogorov complexity with
$\mathbf{0}'$-oracle.

Then we use the same argument to prove similar results for
prefix complexity (and also improve results of~\cite{muchnik}
about limit frequencies), a priori probability on binary tree
and measure of effectively open sets. As a by-product, we get a
criterion of $\mathbf{0}'$ Martin-L\"of randomness (called also
$2$-randomness) proved in~\cite{miller}: a sequence $\omega$ is
$2$-random if and only if there exists $c$ such that any prefix
$x$ of $\omega$ is a prefix of some string $y$ such that
$\KS(y)\ge |y|-c$. (In the 1960ies this property was suggested in
\cite{kolmogorov} as one of possible randomness definitions; its
equivalence to $2$-randomness was shown in~\cite{miller} while
proving another $2$-randomness criterion (see also~\cite{nies}):
$\omega$ is $2$-random if and only if $\KS(x)\ge |x|-c$ for some
$c$ and infinitely many prefixes $x$ of $\omega$.

Finally, we show that the low-basis theorem can be used to get
alternative proofs for these results and to improve the result
about effectively open sets; this stronger version implies the
$2$-randomness criterion mentioned in the previous sentence.

\end{abstract}

\maketitle

\stacsheading{2008}{73-84}{Bordeaux}
\firstpageno{73}

\section{Plain complexity}

By $\KS(x)$ we mean the plain complexity of a binary string $x$
(the length of the shortest description of $x$ when an optimal
description method is fixed, see~\cite{li-vitanyi}; no
requirements about prefixes). By $\KS(x|n)$ we mean conditional
complexity of $x$ when $n$ is given~\cite{li-vitanyi}.
Superscript $\mathbf{0}'$ in $\KS^{\mathbf{0}'}$ means that we
consider the relativized (with oracle $\mathbf{0}'$, the
universal enumerable set) version of complexity.

The following result was proved in~\cite{ver}. We provide a
simple proof for it.

\begin{theorem}
        \label{plain}
        $$
\limsup_{n\to\infty}\, \KS(x|n)=\KS^{\mathbf{0}'}(x)+O(1).
        $$
\end{theorem}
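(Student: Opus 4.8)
The plan is to prove the two inequalities separately. Throughout fix an optimal conditional machine $U$ (so $\KS(x\mid n)=\min\{|p|:U(p,n)=x\}$) and an optimal oracle machine $U^{\mathbf{0}'}$ for $\KS^{\mathbf{0}'}$, and let $\mathbf{0}'_n$ denote the finite set of elements enumerated into $\mathbf{0}'$ within $n$ steps of a fixed enumeration, so that $\mathbf{0}'_0\subseteq\mathbf{0}'_1\subseteq\cdots$ and $\bigcup_n\mathbf{0}'_n=\mathbf{0}'$.

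For the inequality $\limsup_n \KS(x\mid n)\le \KS^{\mathbf{0}'}(x)+O(1)$ I would use the condition $n$ to approximate the oracle. Given a shortest $\mathbf{0}'$-program $p$ for $x$, define a conditional description method $V(p,n)$ that simulates $U^{\mathbf{0}'}$ on $p$, answering every oracle query ``$q\in\mathbf{0}'$?'' by ``yes'' iff $q\in\mathbf{0}'_n$; this is a partial computable function of $(p,n)$. The halting computation $U^{\mathbf{0}'}(p)=x$ asks only finitely many queries, each of its finitely many positive answers is already present in $\mathbf{0}'_{s_0}$ for some $s_0$, and its negative answers are preserved because $\mathbf{0}'_n\subseteq\mathbf{0}'$. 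Hence for every $n\ge s_0$ the simulation follows exactly the same path and outputs $x$, so $\KS(x\mid n)\le|p|+O(1)=\KS^{\mathbf{0}'}(x)+O(1)$ for all large $n$, which bounds the $\limsup$.

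The reverse inequality $\KS^{\mathbf{0}'}(x)\le \limsup_n\KS(x\mid n)+O(1)$ is the substantial direction. The resource I would exploit is the counting bound: for every $n$ there are fewer than $2^{m+1}$ strings with $\KS(x\mid n)\le m$, and with oracle $\mathbf{0}'$ the finite set $G_n=\{x:\KS(x\mid n)\le m\}$ can be computed \emph{exactly}, since $\mathbf{0}'$ decides halting. Writing $m=\limsup_n\KS(x\mid n)$, the string $x$ belongs to $G_n$ for all sufficiently large $n$, i.e. to $\liminf_n G_n$. The idea is to run, relative to $\mathbf{0}'$, a stage-by-stage process over $n=0,1,2,\dots$ maintaining an injective assignment of the current elements of $G_n$ to codes below $2^{m+1}$: a string keeps its code as long as it stays in $G_n$ and releases it when it leaves. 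Because at most $2^{m+1}-1$ codes are ever occupied at once, we never run out; and any $x\in\liminf_n G_n$ eventually enters for good and freezes on a permanent code of length $m+O(1)$. To remove the dependence on $m$ one runs these processes for all $m$ together, which the counting bound keeps within an $O(2^m)$ code budget and hence costs only $O(1)$ extra bits.

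The main obstacle is that ``eventually in $G_n$'' is a two-quantifier ($\Sigma_3$) condition, whereas a $\mathbf{0}'$-description corresponds to a single limit ($\Delta_2$): indeed $\limsup_n\KS(x\mid n)$ is a priori only $\mathbf{0}''$-computable from above, so the theorem asserts a genuine collapse. Concretely, decoding a frozen code by literally taking the limit of its holder over $n$ would require $\mathbf{0}''$, not $\mathbf{0}'$. The heart of the proof is therefore to reorganise this double limit into a single $\mathbf{0}'$-computation --- equivalently, through the reformulation $\KS^{\mathbf{0}'}(x)=\min\{|p|:\lim_s V(p,s)=x\}+O(1)$ that trades the oracle for limits of computable approximations, to realise the slot-reassignment scheme as \emph{one} limit rather than two. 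Arranging the freeing and reuse of codes to respect the per-stage bound $<2^{m+1}$, so that the correct $x$ is recovered by a single $\mathbf{0}'$-search with no $\log m$ overhead, is the step I expect to demand the real work.
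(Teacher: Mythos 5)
Your first direction is correct and is exactly the paper's argument: simulate the oracle machine answering queries from $\mathbf{0}'_n$, and use that a halting computation touches only a finite part of the oracle, which $\mathbf{0}'_n$ gets right for all large $n$. The second direction, however, is not a proof. You set up the right objects (the uniformly enumerable sets $G_n=\{u:\KS(u|n)\le m\}$ of size $<2^{m+1}$, the observation that $x\in\liminf_n G_n$) and you correctly diagnose the obstacle (membership in $\liminf_n G_n$ is one quantifier beyond what $\mathbf{0}'$-enumerability tolerates), but you then explicitly defer the step that overcomes it (``the step I expect to demand the real work''). Worse, the slot-reassignment scheme you sketch founders on precisely the point you concede: because codes are \emph{released and reused} when a string leaves $G_n$, the holder of a given code can change unboundedly often, so recovering $x$ from its ``frozen'' code is a limit of a $\mathbf{0}'$-computable map, i.e.\ a $\mathbf{0}''$-computation, and nothing in the proposal converts this into a single $\mathbf{0}'$-procedure. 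Nor can you simply forbid releasing codes: strings may drift in and out of the $G_n$, so the number of distinct strings ever appearing in some $G_n$ is unbounded and a never-release policy exhausts any $O(2^m)$ code budget.

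The idea your proposal is missing is the paper's monotone \emph{covering} trick, which replaces exact tracking by one-sided approximation. One does not compute $\liminf_n U_n$ (with $U_n=\{u:\KS(u|n)<k\}$); one covers it by a $\mathbf{0}'$-enumerable set that still has fewer than $2^k$ elements. For each pair $(N,u)$, in some computable order, ask $\mathbf{0}'$ whether adding $u$ to every (current, already modified) $U_n$ with $n\ge N$ keeps all the sets below $2^k$ elements --- a question $\mathbf{0}'$ can decide, since a violation shows up at a finite stage of the enumeration --- and if so, perform the addition \emph{permanently}. The set of all $u$ ever added is $\mathbf{0}'$-enumerable by construction (additions are never undone, and after each finite batch of accepted operations the family remains enumerable without any oracle); it has fewer than $2^k$ elements, since otherwise some $U_n$ with large $n$ would overflow; and it contains every $u\in\liminf_n U_n$, because for such $u$ the $(N,u)$-operation with $N$ large changes nothing and is therefore automatically acceptable. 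Assigning strings of length $k$ as codes in order of this $\mathbf{0}'$-enumeration then gives $\KS^{\mathbf{0}'}(x)\le k+O(1)$ with no limit taken at decoding time. This replacement of reassignable slots by irrevocable additions is exactly the quantifier collapse your final paragraph calls for, and it is the substantive content absent from your proposal.
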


\begin{proof}
We start with the easy part. Let $\mathbf{0}_n$ be the
(finite) part of the universal enumerable set that appeared
after $n$ steps. If $\KS^{\mathbf{0}'}(x)\le k$, then there
exists a description (program) of size at most $k$ that
generates $x$ using $\mathbf{0}'$ as an oracle. Only finite part
of the oracle can be used, so $\mathbf{0}'$ can be replaced by
$\mathbf{0}_n$ for all sufficiently large $n$, and oracle
$\mathbf{0}_n$ can be reconstructed if $n$ is given as a
condition. Therefore, $\KS(x|n)\le k+O(1)$ for all sufficiently
large $n$, and
        $$
\limsup_{n\to\infty}\,\KS(x|n)\le \KS^{\mathbf{0}'}(x)+O(1).
        $$

Now fix $k$ and assume that $\limsup\,\KS(x|n) <k$. This means that for all
sufficiently large $n$ the string $x$ belongs to the set
        $$
U_n=\{ u \mid \KS(u|n)<k\}.
        $$
The family $U_n$ is an enumerable family of sets (given $n$ and $k$, we
generate $U_n$); each of these sets has less than $2^k$
elements. We need to construct a $\mathbf{0}'$-computable
process that given $k$ generates at most $2^k$ elements, and among them
all elements that belong to $U_n$ for all sufficiently
large~$n$. (Then strings of length $k$ may be assigned as
$\mathbf{0}'$-computable codes of all generated elements.)

To describe this process, consider the following
operation: for some $u$ and $N$ add $u$ to all $U_n$ such
that $n\ge N$. (In other terms, we add a horizontal ray starting
from $(N,u)$ to the set $\mathcal{U}=\{(n,u)\mid u\in U_n\}$.)
This operation is \emph{acceptable} if all $U_n$ still have less
than $2^k$ elements after it (i.e., if before this operation all
$U_n$ such that $n\ge N$ either contain $u$ or have less than
$2^k-1$ elements).

For given $u$ and $k$ we can find out using $\mathbf{0}'$-oracle
whether this operation is acceptable. Now for all pairs $(N,u)$
(in some computable order) we perform $(N,u)$-operation if it is
acceptable. (The elements added to some $U_i$ remain there
and are taken into account when next operations are attempted.)
This process is $\mathbf{0}'$-computable since after any finite
number of operations the family $\mathcal{U}$ is enumerable
(without any oracle) and its enumeration algorithm can be
$\mathbf{0}'$-effectively found (uniformly in~$k$).

Therefore the set of all elements $u$ that participate in
acceptable operations during this process is uniformly
$\mathbf{0}'$-enumerable. This set contains less than $2^k$
elements (otherwise $U_n$ would become too big for large $n$).
Finally, this set contains all $u$ such that $u$ belongs to the
(initial) $U_n$ for all sufficiently large $n$. Indeed, the
operation is always acceptable if all added elements are
already present.
\end{proof}

The proof has the following structure. We have an enumerable family
of sets $U_n$ that have less than $2^k$ elements. This implies
that the set
        $$
U_\infty=\liminf_{n\to\infty} U_n
        $$
has less than $2^k$ elements (the $\liminf$ of a sequence of sets
is the set of elements that belong to almost all sets of the
sequence). If this set were
$\mathbf{0}'$-enumerable, we would be done. However, this may be
not the case: the criterion
        $$
u\in U_{\infty} \Leftrightarrow \exists N\,(\forall n\ge N)\, [u\in U_n]
        $$
has $\exists\forall$ prefix before an enumerable (not
necessarily decidable) relation, that is, one quantifier more
than we want (to guarantee that $U_\infty$ is
$\mathbf{0}'$-enumerable). However, in our proof we managed to
cover $U_\infty$ by a set that is $\mathbf{0}'$-enumerable and still has
less than $2^k$ elements.

\section{Prefix complexity and a priori probability}

Now we prove similar result for prefix complexity (or, in other
terms, for a priori probability). Let us recall the definition.
The function $a(x)$ on binary strings (or integers) with
non-negative real values is called a \emph{semimeasure} if
$\sum_x a(x) \le 1$. The function~$a$ is \emph{lower
semicomputable} if there exists a computable total function
$(x,n)\mapsto a(x,n)$ with rational values such that for
every~$x$ the sequence $a(x,0), a(x,1),\ldots$ is a
non-decreasing sequence that has limit~$a(x)$.

There exists a maximal (up to a constant factor) lower
semicomputable semimeasure $m$. The value $m(x)$ is sometimes
called the \emph{a priori probability} of $x$. In the same way
we can define \emph{conditional} a priory probability $m(x|n)$ and
$\mathbf{0}'$-\emph{relativized} a priori probability
$m^{\mathbf{0}'}(x)$.

\begin{theorem}
        \label{apriori}
        $$
\liminf_{n\to\infty}\,m(x|n)=m^{\mathbf{0}'}(x)
        $$
up to a $\Theta(1)$ factor.
\end{theorem}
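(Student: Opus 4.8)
The plan is to follow the proof of Theorem~\ref{plain} line by line, replacing the additive/counting language by its multiplicative/measure-theoretic analog and $\limsup$ by $\liminf$ (complexity turns into probability, so the inequalities flip). I split the claim into the two inequalities $\liminf_n m(x|n)\ge \Omega(1)\,m^{\mathbf{0}'}(x)$ and $\liminf_n m(x|n)\le O(1)\,m^{\mathbf{0}'}(x)$. For the easy inequality I would exhibit a single lower semicomputable conditional semimeasure $a(\cdot\,|\,n)$ with $\liminf_n a(x|n)\ge m^{\mathbf{0}'}(x)$; maximality of $m(\cdot\,|\,n)$ then supplies the constant factor. To build $a(\cdot\,|\,n)$ I would run the $\mathbf{0}'$-approximation of $m^{\mathbf{0}'}$ but feed it the finite oracle $\mathbf{0}_n$ instead of $\mathbf{0}'$, capping the construction so that the total mass never exceeds $1$ (this keeps each $a(\cdot\,|\,n)$ a genuine semimeasure even when $\mathbf{0}_n$ answers some query incorrectly). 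For any fixed stage $s$ of the true approximation, all its finitely many oracle queries are answered correctly by $\mathbf{0}_n$ for all large $n$, and the cap is not triggered there; hence $a(x|n)\ge m^{\mathbf{0}'}(x)[s]$ for all large $n$, and letting $s\to\infty$ yields the bound.

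The hard inequality is where the covering trick of Theorem~\ref{plain} reappears. The role of the sets $U_n$ of size $<2^k$ is now played by the conditional semimeasures $m(\cdot\,|\,n)$, each of total mass $\le 1$, and the role of $U_\infty=\liminf_n U_n$ is played by $L(x)=\liminf_n m(x|n)$. By Fatou's lemma $\sum_x L(x)\le \liminf_n\sum_x m(x|n)\le 1$, so $L$ is already a semimeasure; the only obstruction is that the defining condition $\liminf_n m(x|n)>q$ has shape $\exists N\,\forall n\ge N$ in front of an enumerable relation, one quantifier too many for $\mathbf{0}'$-lower-semicomputability. So I would cover $L$ from above by a $\mathbf{0}'$-lower-semicomputable semimeasure $M$. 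Initialize running functions $\mu_n:=m(\cdot\,|\,n)$ and process all triples $(N,x,q)$ with rational $q>0$ in a computable order, performing the ray operation ``for every $n\ge N$ set $\mu_n(x):=\max(\mu_n(x),q)$'' whenever it is \emph{acceptable}, i.e.\ keeps $\sum_v\mu_n(v)\le 1$ for all $n\ge N$. Define $M(x)$ to be the supremum of the levels $q$ reached by accepted operations at $x$.

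Three things then have to be checked, all mirroring Theorem~\ref{plain}. First, acceptability is a $\mathbf{0}'$-question: its negation ``$\exists n\ge N$ with $\sum_v\mu_n(v)>1$ after the raise'' is enumerable, and after finitely many accepted operations the family $\{\mu_n\}$ is lower semicomputable without oracle with a $\mathbf{0}'$-effectively findable index, so $M$ is $\mathbf{0}'$-lower-semicomputable. Second, $M$ is a semimeasure: for any finite set of elements raised to levels $q_x$ with thresholds $N_x$, choosing one $n\ge\max_x N_x$ gives $\mu_n(x)\ge q_x$, whence $\sum_x q_x\le\sum_v\mu_n(v)\le 1$; taking suprema gives $\sum_x M(x)\le 1$. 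Third, $M(x)\ge L(x)$: this is the exact analog of the remark ``the operation is always acceptable if all added elements are already present.'' Indeed, if $q<\liminf_n m(x|n)$ then $m(x|n)\ge q$, hence $\mu_n(x)\ge q$ already, for all large $n$; the corresponding raise with threshold $N$ equal to that bound is a no-op and therefore vacuously acceptable, so $M(x)\ge q$. Ranging over rational $q<L(x)$ gives $M(x)\ge L(x)$, and maximality of $m^{\mathbf{0}'}$ finishes the argument.

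I expect the main obstacle to be exactly this last point: guaranteeing that the \emph{legitimate} raises are accepted. The danger is that earlier, possibly spurious, operations have already filled the budget of some $\mu_n$ with $n\ge N$, leaving no room for $x$. Keeping the original $m(\cdot\,|\,n)$ as the base of each $\mu_n$ is what neutralizes this danger -- a legitimate raise never needs fresh mass, since the required mass is supplied by the base -- and it is also why including the base does not inflate $\sum_x M(x)$ beyond $1$. The remaining care is bookkeeping: checking that ``acceptable'' stays decidable with $\mathbf{0}'$ once strict versus non-strict comparisons of lower-semicomputable sums are handled, and, in the easy direction, that the capping genuinely preserves the semimeasure property while not destroying the lower bound.
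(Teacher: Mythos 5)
Your proposal is correct and follows essentially the same route as the paper: your hard direction is exactly the paper's construction (acceptable $(r,N,u)$-increase operations processed in computable order, $\mathbf{0}'$-decidability of acceptability via the enumerability of overflow, and the key observation that a raise below $\liminf_n m(x|n)$ is a no-op and hence acceptable). The only difference is cosmetic and lies in the easy direction, where the paper directly uses the universal semimeasure $m^{\mathbf{0}_n}$ relativized to the finite oracle $\mathbf{0}_n$ (computable from $n$), while you simulate the $\mathbf{0}'$-approximation with oracle $\mathbf{0}_n$ and cap the mass -- both arguments hinge on finite oracle use plus maximality of $m(\cdot\,|\,n)$.
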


(In other terms, two inequalities with $O(1)$ factors hold.)

\begin{proof}
If $m^{\mathbf{0}'}(x)$ is greater that some
$\varepsilon$, then for some  $k$
the increasing sequence $m^{\mathbf{0}'}(x,k)$
that has
limit $m^{\mathbf{0}'}(x)$ becomes greater than $\varepsilon$.
The
computation of $m^{\mathbf{0}'}(x,k)$
uses only finite amount of information about the
oracle, thus for all sufficiently large  $n$ we have
$m^{\mathbf{0}_n}(x)\ge m^{\mathbf{0}_n}(x,k)>\varepsilon$.
So, similar to the previous theorem, we have
        $$
\liminf_{n\to\infty}\,m(x|n)\ge
\liminf_{n\to\infty}\,m^{\mathbf{0}_n}(x)\ge
m^{\mathbf{0}'}(x)
        $$
up to $O(1)$ factors.

In the other direction the proof is also similar to the previous
one. Instead of enumerable finite sets $U_n$ now we have a
sequence of (uniformly) lower semicomputable functions $x\mapsto
m_n(x)=m(x|n)$. Each of $m_n$ is a semimeasure. We need to
construct a lower $\mathbf{0}'$-semicomputable semimeasure $m'$
such that
        $$
m'(x)\ge \liminf_{n\to\infty}\,m_n(x)
        $$
Again, the $\liminf$ itself cannot be used as $m'$: though
$\sum_x \liminf_n m_n(x)<1$ if $\sum_x m_n(x)\le 1$ for all
$n$, but, unfortunately, the equivalence
        $$
r < \liminf_{n\to\infty} a_n \Leftrightarrow
(\exists r'>r)(\exists N)\, (\forall n\ge N)\, [r'<a_n]
        $$
has too many quantifier alternations (one more than needed; note
that lower semicomputable $a_n$ makes $[\ldots]$ condition
enumerable). The similar trick helps. For a triple $(r,N,u)$
consider an \emph{increase operation} that increases all values
$m_n(u)$ such that $n\ge N$ up to a given rational number $r$
(not changing them if they were greater than or equal to~$r$).
This operation is \emph{acceptable} if all $m_n$ remain
semimeasures after the increase.

The question whether operation is acceptable is
$\mathbf{0}'$-decidable; if it is, we get a new (uniformly)
lower semicomputable (without any oracle) sequence of
semimeasures and can repeat an attempt to perform an increase
operation for some other triple. Doing that for all triples (in
some computable ordering), we can then define $m'(u)$ as the
upper bound of $r$ for all successful $(r,N,u)$ increase
operations (for all $N$). This gives a $\mathbf{0}'$-lower
semicomputable function; it is a semimeasure since we verify the
semimeasure inequality for every successful increase attempt;
finally, $m'(u) \ge \liminf\, m_n(u)$ since if $m_n(u)\ge r$ for
all $n\ge N$, then $(r,N,u)$-increase does not change anything
and is guaranteed to be acceptable.
\end{proof}

The expression $-\log m(x)$ equals the so-called \emph{prefix}
complexity $\KP(x)$ (up to $O(1)$ term; see~\cite{li-vitanyi}).
The same is true for relativized and conditional versions, an we
get the following reformulation of the last theorem:

\begin{theorem}
        \label{prefix}
        $$
\limsup_{n\to\infty}\,\KP(x|n)= \KP^{\mathbf{0}'}(x)+O(1).
        $$
\end{theorem}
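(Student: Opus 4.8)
The plan is to obtain this statement as an immediate reformulation of Theorem~\ref{apriori} by taking logarithms, using the coding theorem in its three relevant forms. Recall that $\KP(x)=-\log m(x)+O(1)$; the identical relation holds with the oracle $\0'$ attached throughout, giving $\KP^{\0'}(x)=-\log m^{\0'}(x)+O(1)$, and it holds in the conditional setting, giving $\KP(x|n)=-\log m(x|n)+O(1)$. First I would record these three identities explicitly, since the whole argument is just substituting them into the two-sided estimate already established for the a priori probabilities.

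The main manipulation is to apply $-\log$ to the equality $\liminf_n m(x|n)=m^{\0'}(x)$ of Theorem~\ref{apriori}. Because $-\log$ is continuous and strictly decreasing on $(0,\infty)$, it converts a $\liminf$ into a $\limsup$: one has $-\log\bigl(\liminf_n m(x|n)\bigr)=\limsup_n\bigl(-\log m(x|n)\bigr)$. Substituting the conditional coding theorem inside the $\limsup$ on the left and the relativized coding theorem on the right, the multiplicative $\Theta(1)$ gap of Theorem~\ref{apriori} turns into an additive $O(1)$ gap (a constant factor $c$ becomes an additive $\log c$), and we arrive at $\limsup_n \KP(x|n)=\KP^{\0'}(x)+O(1)$, as required. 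All quantities involved are strictly positive and finite, so there is no difficulty with the logarithm being undefined.

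The one genuinely substantive point, which I would be careful to state, is that the conditional coding theorem $\KP(x|n)=-\log m(x|n)+O(1)$ holds with a constant \emph{uniform in $n$}. This is what lets the per-term additive constants be pulled outside the $\limsup$ and absorbed into a single $O(1)$; without uniformity the limit operation could in principle accumulate unbounded error. The uniformity itself is standard: the construction of the optimal conditional prefix machine from the optimal conditional lower semicomputable semimeasure is carried out once, with $n$ supplied as an auxiliary input, so the translation constant is absolute and independent of the particular value of $n$. Beyond this observation the argument is pure bookkeeping of constants and requires no new ideas.
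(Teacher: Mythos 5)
Your proposal is correct and follows exactly the paper's own route: the paper derives Theorem~\ref{prefix} as an immediate reformulation of Theorem~\ref{apriori} via the coding theorem $\KP(x)=-\log m(x)+O(1)$ in its plain, relativized, and conditional forms, with $-\log$ turning the $\liminf$ into a $\limsup$ and the $\Theta(1)$ factor into an additive $O(1)$. Your explicit remark that the conditional coding theorem holds with a constant uniform in $n$ is a point the paper leaves implicit, and it is the right thing to check.
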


Another corollary improves a result of~\cite{muchnik}. For any
(partial) function $f$ from $\mathbb{N}$ to $\mathbb{N}$ we
define the \emph{limit frequency} of an integer $x$ as
        $$
q_f(x)=\liminf_{n\to\infty}\, \frac{\#\{i<n\mid f(i)=x\}}{n}
        $$
In other words, we look at the fraction of $x$-terms in
$f(0),\ldots,f(n-1)$ (undefined values are also listed) and take
$\liminf$ of these frequencies. It is easy to see that for a
total computable $f$ the function $q_f$ is a lower
$\mathbf{0}'$-semicomputable semimeasure. The argument above
proves the following result:

\begin{theorem}
        \label{frequency}
For any partial computable $f$ the function $q_f$ is upper
bounded by a lower $\mathbf{0}'$-semicomputable semimeasure.
\end{theorem}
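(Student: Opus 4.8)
The plan is to recognize that $q_f$ is exactly a $\liminf$ of a uniformly lower semicomputable sequence of semimeasures, and then to apply verbatim the construction from the second half of the proof of Theorem~\ref{apriori}. For each $n$ I would set
$$
m_n(x)=\frac{\#\{i<n\mid f(i)=x\}}{n},
$$
the frequency of the value $x$ among the first $n$ outputs of $f$. By the very definition of $q_f$ we then have $q_f(x)=\liminf_{n\to\infty}m_n(x)$, so it suffices to produce a lower $\mathbf{0}'$-semicomputable semimeasure dominating this $\liminf$.

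First I would check that the $m_n$ form a uniformly lower semicomputable sequence of semimeasures. Each $m_n$ is a semimeasure because $\sum_x m_n(x)$ equals $\#\{i<n\mid f(i)\text{ is defined}\}/n\le 1$. For lower semicomputability, given $x$ and $n$ one runs the computations $f(0),\ldots,f(n-1)$ in parallel; whenever one of them halts with output $x$, the current rational lower bound for $m_n(x)$ is increased by $1/n$. This gives a non-decreasing computable approximation converging to $m_n(x)$, uniformly in $n$. The partiality of $f$ causes no trouble: indices on which $f$ diverges simply contribute to the denominator but never to any numerator, which is exactly what keeps the total mass below~$1$.

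Once these facts are in place, the situation is identical to the one treated in Theorem~\ref{apriori}: we have a uniformly lower semicomputable sequence of semimeasures $m_n$, and the increase-operation construction there produces a lower $\mathbf{0}'$-semicomputable semimeasure $m'$ with $m'(x)\ge\liminf_{n\to\infty}m_n(x)$ for every $x$. Since this $\liminf$ is precisely $q_f(x)$, the semimeasure $m'$ upper bounds $q_f$, as required. Note that, just as in Theorem~\ref{apriori}, one cannot simply take $m'=q_f$: although Fatou's inequality shows $q_f$ is itself a semimeasure, its defining $\liminf$ carries one quantifier alternation too many to be lower $\mathbf{0}'$-semicomputable, so the detour through the increase operations is genuinely needed.

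The argument is thus a corollary rather than a fresh proof, and there is no serious obstacle; the only point demanding a little care is confirming the uniform lower semicomputability of the $m_n$ in the presence of undefined values of $f$. Everything else is a direct invocation of the machinery already developed for the $\liminf$ of semimeasures.
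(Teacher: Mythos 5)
Your proposal is correct and is exactly the paper's intended argument: the paper proves Theorem~\ref{frequency} by noting that the frequencies $m_n(x)=\#\{i<n\mid f(i)=x\}/n$ form a uniformly lower semicomputable sequence of semimeasures with $q_f(x)=\liminf_n m_n(x)$ and then invoking the increase-operation construction from the proof of Theorem~\ref{apriori} verbatim. Your explicit check of uniform lower semicomputability (and of the semimeasure bound) in the presence of undefined values of $f$ merely fills in the one detail the paper leaves implicit in the phrase ``the argument above proves the following result.''
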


In~\cite{muchnik} it is shown that for some total computable $f$
the function $q_f$ is a maximal lower
$\mathbf{0}'$-semicomputable semimeasure and therefore
$\mathbf{0}'$-relativized a priori probability can be defined as
maximal limit frequency for total computable functions. Now we
see that the same is true for partial computable functions:
allowing them to be partial does not increase the maximal limit
frequency.

The similar argument also is applicable to the so-called \emph{a
priori complexity} defined as negative logarithm of a maximal
lower semicomputable semimeasure on the binary tree
(see~\cite{zvonkin-levin}). This complexity is sometimes denoted
as $\KA(x)$ and we get the following statement:

\begin{theorem}
        \label{aprioritree}
        $$
\limsup_{n\to\infty}\KA(x|n)=\KA^{\mathbf{0}'}(x)+O(1).
        $$
\end{theorem}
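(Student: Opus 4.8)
The plan is to mirror the structure of the proof of Theorem~\ref{apriori} (and hence Theorem~\ref{prefix}), since a priori complexity $\KA$ is defined as the negative logarithm of a maximal lower semicomputable semimeasure on the binary tree, and the only difference from the a priori probability on integers is that the underlying objects are semimeasures on the tree $\{0,1\}^*$ rather than on $\N$. Concretely, I would reformulate the statement in terms of semimeasures on the tree: a function $a$ on strings is a \emph{tree semimeasure} if $a(\Lambda)\le 1$ and $a(x)\ge a(x0)+a(x1)$ for every $x$ (where $\Lambda$ is the empty string). Writing $\KA(x)=-\log M(x)$ for a maximal lower semicomputable tree semimeasure $M$, and similarly for the conditional and $\mathbf{0}'$-relativized versions, the theorem becomes the assertion that $\liminf_n M(x|n)=M^{\mathbf{0}'}(x)$ up to a $\Theta(1)$ factor.

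First I would handle the easy inequality $\limsup_n \KA(x|n)\le \KA^{\mathbf{0}'}(x)+O(1)$, exactly as in the first half of the proof of Theorem~\ref{apriori}: if $M^{\mathbf{0}'}(x)>\varepsilon$, then this lower bound is witnessed by a finite stage of the approximation using only a finite part $\mathbf{0}_n$ of the oracle, so $M^{\mathbf{0}_n}(x)>\varepsilon$ for all large enough $n$, and $M^{\mathbf{0}_n}$ can be reconstructed from $n$; hence $\liminf_n M(x|n)\ge M^{\mathbf{0}'}(x)$ up to $O(1)$. Then, for the nontrivial direction, I would repeat the ``increase operation'' construction: given the uniformly lower semicomputable sequence of tree semimeasures $M_n(x)=M(x|n)$, for each triple $(r,N,x)$ I would attempt to raise $M_n(x)$ to the rational value $r$ for all $n\ge N$ simultaneously, declaring the operation \emph{acceptable} precisely when every $M_n$ remains a tree semimeasure afterwards. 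The acceptability of each operation is $\mathbf{0}'$-decidable, the resulting sequences stay uniformly lower semicomputable without an oracle, and $M'(x)$ is defined as the supremum of $r$ over all successful $(r,N,x)$-operations; this yields a lower $\mathbf{0}'$-semicomputable function bounding $\liminf_n M_n$ from above.

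The one genuine point of care — and the place I expect the main obstacle — is verifying that $M'$ is itself a tree semimeasure, i.e.\ that the defining inequalities $M'(x)\ge M'(x0)+M'(x1)$ survive the construction. In the flat case of Theorem~\ref{apriori} the constraint was the single global inequality $\sum_x m_n(x)\le 1$, and acceptability of an increase operation was a clean local-to-global condition. On the tree, by contrast, increasing $M_n(x)$ forces attention on the \emph{ancestors} of $x$, since raising $M_n(x)$ may violate $M_n(\mathrm{parent}(x))\ge M_n(x)+M_n(\mathrm{sibling}(x))$. I would therefore define the increase operation so that it cannot raise the value at a node above what its parent's current value permits, and check that the global invariant ``every $M_n$ is a tree semimeasure'' is preserved by each accepted operation. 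The key consistency fact, as before, is that a $(r,N,x)$-operation is automatically acceptable whenever $M_n(x)\ge r$ already held for all $n\ge N$, which guarantees $M'(x)\ge\liminf_n M_n(x)$; once the tree-semimeasure invariant is correctly phrased into the acceptability test, the remainder is routine and follows the template of Theorem~\ref{apriori} verbatim.
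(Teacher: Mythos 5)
Your overall template is right, and you correctly locate the crux at the parent--child inequalities; but your resolution of that crux goes in the wrong direction and breaks precisely the step you take for granted, namely that acceptability is $\mathbf{0}'$-decidable. In your design the ancestors are left untouched, so the acceptability test for raising $M_n(x)$ to $r$ must verify $M_n(p)\ge \max(M_n(x),r)+M_n(s)$ (with $p$ the parent of $x$ and $s$ its sibling) for all $n\ge N$. This compares two \emph{lower semicomputable} reals: a violation can only be certified by an upper bound on the final value of $M_n(p)$, which no finite amount of enumeration-from-below provides, so the violation condition is $\Sigma^0_2$ rather than $\Sigma^0_1$, and acceptability is $\Pi^0_2$ --- beyond what a $\mathbf{0}'$-oracle can decide. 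Contrast this with Theorem~\ref{apriori}, where the only constraint is $\sum_x m_n(x)\le 1$: there the comparison is against the rational constant $1$, violations are enumerable, and acceptability is $\Pi^0_1$. Your ``capping'' variant fares no better: the cap $M_n(p)-M_n(s)$ is a difference of lower semicomputable quantities, hence not itself lower semicomputable, so a capped raise destroys the invariant that the modified family remains uniformly lower semicomputable without an oracle --- the invariant your process needs in order to continue. Nor can you test the parent inequality only against the current finite-stage approximations: the sibling's underlying approximation keeps growing after acceptance, so the limit functions could then fail to be tree semimeasures at all.

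The paper's (one-line) proof makes exactly the opposite design choice: the $(r,N,x)$-operation raises not only $M_n(x)$ but also $M_n(y)$ for every prefix $y$ of $x$, by as much as is needed to keep all inequalities $M_n(y)\ge M_n(y0)+M_n(y1)$ true, and acceptability is tested \emph{only at the root}: the operation is acceptable iff $M_n(\Lambda)$ still does not exceed $1$. Since the propagated root value is again lower semicomputable and is compared with the constant $1$, a violation is enumerable, acceptability is $\Pi^0_1$ and hence $\mathbf{0}'$-decidable, and accepted operations keep the family uniformly lower semicomputable without an oracle. Your remaining observations --- that an operation with $M_n(x)\ge r$ already holding for all $n\ge N$ is a no-op and hence vacuously acceptable, which yields $M'(x)\ge\liminf_n M_n(x)$, and the verification that $M'$ satisfies the tree-semimeasure inequalities --- then go through as in Theorem~\ref{apriori}. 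So the one missing idea is exactly the paper's: push increases upward through the ancestors so that the only inequality ever tested has a rational constant on one side.
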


(To prove this we define an increase operation in such a way
that it increases not only $a(x)$ but also $a(y)$ for $y$ that
are prefixes of $x$, if necessary. The increase is acceptable if
$a(\Lambda)$ still does not exceed~$1$.)

It would be interesting to find out whether similar results are
true for monotone complexity or not (the authors do not know this).

\section{Open sets of small measure}

We now try to apply the same trick in a slightly different
situation, for effectively open sets. The Cantor space $\Omega$
is a set of all infinite sequence of zeros and ones. An
\emph{interval} $\Omega_x$ (for a binary string $x$) is formed
by all sequences that have prefix $x$. Open sets are unions of
intervals. An \emph{effectively open} subset of $\Omega$ is an
enumerable union of intervals, i.e., the union of intervals
$\Omega_x$ where $x$ are takes from some enumerable set of
strings.

We consider standard (uniform Bernoulli) measure on $\Omega$:
the interval $\Omega_x$ has measure $2^{-l}$ where $l$ is the
length of $x$.

A classical theorem of measure theory says: \emph{if
$U_0,U_1,U_2,\ldots$ are open sets of measure at most
$\varepsilon$, then $\liminf_n U_n$ has measure at most
$\varepsilon$, and this implies that for every
$\varepsilon'>\varepsilon$ there exists an open set of measure
at most $\varepsilon'$ that covers $\liminf_n U_n$.}

Indeed,
        $$
\liminf_{n\to\infty}\,U_n = \bigcup_{N} \bigcap_{n\ge N} U_n,
        $$
and the measure of the union of an increasing sequence
        $$
V_N = \bigcap_{n\ge N} U_n,
        $$
equals the limit of measures of $V_N$, and all these measures do
not exceed $\varepsilon$ since $V_N\subset U_N$. It remains to
note that for any measurable set $X$ its measure is the infimum
of the measures of open sets that cover~$X$.

We now can try to ``effectivize'' this statement in the same way
as we did before. First we started with an (evident) statement:
\emph{if $U_n$ are finite sets of at most $2^k$ elements, then
$\liminf_n U_n$ has at most $2^k$ elements} and proved its
effective version: \emph{for a uniformly enumerable family of
open sets $U_n$ that have at most $2^k$ elements, the set
$\liminf_n U_n$ is contained in a uniformly
$\mathbf{0}'$-enumerable set that has at most $2^k$ elements.}
Then we did similar thing with semimeasures (again, the
non-effective version is trivial: it says that if $\sum_x
m_n(x)\le 1$ for every $n$, then $\sum_x\liminf_n m_n(x)\le
1$).

Now the effective version could look like this. \emph{Let
$\varepsilon>0$ be a rational number and let $U_0,U_1,\ldots$ be an
enumerable family of effectively open sets of measure at most
$\varepsilon$ each. Then for every rational
$\varepsilon'>\varepsilon$ there exists a
$\mathbf{0}'$-effectively open set of measure at most
$\varepsilon'$ that contains} $\liminf_{n\to\infty} U_i=
\bigcup_{N} \bigcap_{n\ge N} U_n$.

However, the authors do not know whether this is always true. The
argument that we have used can nevertheless be applied do prove
the following weaker version:

\begin{theorem}
        \label{opensetinterior}
Let $\varepsilon>0$ be a rational number and let $U_n$ be an
enumerable family of effectively open sets of measure at most
$\varepsilon$ each. Then there exists a uniformly
$\mathbf{0}'$-effectively open set of measure at most
$\varepsilon$ that contains
        $$
\bigcup_{N} \mathrm{Int}\bigl(\bigcap_{n\ge N} U_n\bigr)
        $$
\end{theorem}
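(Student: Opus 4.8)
The plan is to mimic the ray-adding construction from the proofs of Theorems~\ref{plain} and~\ref{apriori}, now with intervals in place of points. For a basic interval $\Omega_x$ and a threshold $N$ I would introduce the operation that adds $\Omega_x$ to every $U_n$ with $n\ge N$ (a horizontal ray, exactly as before, but now each ``cell'' carries an effectively open set). I call this $(\Omega_x,N)$-operation \emph{acceptable} if, after it is performed, every $U_n$ still has measure at most $\varepsilon$. As in the previous proofs, acceptability is decidable with the $\mathbf{0}'$-oracle: after finitely many operations the modified family $\{U_n\}$ is still uniformly enumerable without any oracle, and the statement ``$U_n\cup\Omega_x$ has measure exceeding $\varepsilon$ for some $n\ge N$'' is enumerable, since measures of effectively open sets are approximable from below and $\varepsilon$ is rational. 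Hence its negation, which is the acceptability condition, is $\mathbf{0}'$-decidable, uniformly.

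Then I would run through all pairs $(\Omega_x,N)$ in a computable order and perform each acceptable operation, retaining the interval it adds; later operations see all intervals added so far. The output set $W$ is the union of all intervals added during acceptable operations. Since the whole process is $\mathbf{0}'$-computable, the family of added intervals is $\mathbf{0}'$-enumerable, so $W$ is a uniformly $\mathbf{0}'$-effectively open set.

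For the measure bound I would reuse the classical fact quoted above. The invariant ``every $U_n$ has measure at most $\varepsilon$'' holds initially and is preserved by each acceptable operation, so by continuity of measure from below along the increasing sequence of states, each final set $U_n^{\mathrm{fin}}$ also has measure at most $\varepsilon$. Every interval added by an $(\Omega_x,N)$-operation lies in $U_n^{\mathrm{fin}}$ for all $n\ge N$, hence $W\subseteq\liminf_n U_n^{\mathrm{fin}}=\bigcup_N\bigcap_{n\ge N}U_n^{\mathrm{fin}}$; the latter is a monotone union of sets $\bigcap_{n\ge N}U_n^{\mathrm{fin}}\subseteq U_N^{\mathrm{fin}}$ of measure at most $\varepsilon$, so its measure, and therefore the measure of $W$, is at most $\varepsilon$.

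The decisive point is that $W$ covers the required set. If $\omega\in\mathrm{Int}\bigl(\bigcap_{n\ge N}U_n\bigr)$, then there is a basic interval $\Omega_x$ with $\omega\in\Omega_x\subseteq\bigcap_{n\ge N}U_n$, so $\Omega_x\subseteq U_n$ already for every $n\ge N$ (and a fortiori after any earlier additions, which only enlarge the $U_n$). Hence the $(\Omega_x,N)$-operation adds nothing and cannot spoil any measure: it is automatically acceptable, so it is performed and $\Omega_x\subseteq W$, giving $\omega\in W$. This is exactly where the main obstacle lies, and why I only expect to cover the interior rather than the full $\liminf$: a ray-operation can contribute only an interval that is simultaneously contained in all $U_n$ from some point on, whereas a point of $\bigcap_{n\ge N}U_n$ outside the interior may be caught by genuinely different intervals for different $n$, with no single interval working for a whole tail, so no single acceptable operation reaches it. The remaining technical care is in verifying that acceptability, although it is an infinite conjunction over $n\ge N$, is still merely a $\Pi_1$ condition and hence $\mathbf{0}'$-decidable.
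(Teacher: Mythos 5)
Your proposal is correct and is essentially identical to the paper's own proof: the same $(\Omega_x,N)$ ray-adding operation, the same acceptability condition (all $U_n$ keep measure at most $\varepsilon$) decided by $\mathbf{0}'$ because its failure is enumerable, and the same key observation that when $\Omega_x\subseteq\bigcap_{n\ge N}U_n$ the operation changes nothing and is therefore automatically acceptable, which covers the union of interiors. You merely make explicit the measure bound on the output set (via $W\subseteq\liminf_n U_n^{\mathrm{fin}}$ and continuity of measure from below) that the paper leaves implicit, relying on the classical lemma it states just before the theorem.
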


Here $\mathrm{Int}(X)$ denotes the interior part of $X$, i.e.,
the union of all open subsets of $X$. In this case we do not
need $\varepsilon'$ (which one could expect since the union of
open sets is open).

\begin{proof}
Following the same scheme, for every string $x$ and
integer $N$ we consider $(x,N)$-operation that adds $\Omega_x$
to all $U_n$ such that $n\ge N$. This operation is
\emph{acceptable} if measures of all $U_n$ remain at most
$\varepsilon$ for each $n$. This can be checked using
$\textbf{0}'$-oracle (if the operation is not acceptable, it
becomes known after a finite number of steps).

We attempt to perform this operation (if acceptable) for all
pairs in some computable order. The union of all added intervals
for all accepted pairs is $\mathbf{0}'$-effectively open. If
some sequence belongs to the union of the interior parts, then it is
covered by some interval $\Omega_u$ that is a subset of $U_n$
for all sufficiently large $n$. Then some $(u,N)$-operation is
acceptable since it actually does not  change anything and
therefore $\Omega_u$ is a part of an $\mathbf{0}'$-open set that
we have constructed.
\end{proof}

\section{Kolmogorov and $2$-randomness}

This result has an historically remarkable corollary. When
Kolmogorov tried to define randomness in 1960ies, he started
with the following approach. A sequence $x$ of length $n$ is
``random'' if its complexity $\KS(x)$ (or conditional complexity
$\KS(x|n)$; in fact, these requirements are almost equivalent)
is close to $n$: the \emph{randomness deficiency} $d(x)$ is
defined as the difference $|x|-\KS(x)$ (here $|x|$ stands for
the length of~$x$). This sounds reasonable, but if we then
define a random sequence as a sequence whose prefixes have
deficiencies bounded by a constant, such a sequence does not exist at all:
Martin-L\"of showed that every infinite sequence has prefixes
of arbitrarily large deficiency, and suggested a different
definition of randomness using effectively null sets. Later more
refined versions of randomness deficiency (using monotone or
prefix complexity) appeared that make the criterion of
randomness in terms of deficiencies possible. But before that,
in 1968, Kolmogorov wrote: ``The most natural definition of
infinite Bernoulli sequence is the following: $x$ is considered
$m$-Bernoulli type if $m$ is such that all $x^i$ are
\emph{initial segments} of the finite $m$-Bernoulli sequences.
Martin-L\"of gives another, possibly narrower
definition'' (\cite{kolmogorov}, p.~663).

Here Kolmogorov speaks about ``$m$-Bernoulli'' finite sequence $x$
(this means that $\KS(x|n,k)$ is greater than
$\log\binom{n}{k}-m$ where $n$ is the length of $x$ and $k$ is
the number of ones in $x$). For the case of uniform Bernoulli
measure (where $p=q=1/2$) one would reformulate this definition
as follows. Let us define
        $$
\bar d (x)=\inf \{d(y)\mid \text{$x$ is a prefix of $y$}\}
        $$
and require that $\bar d(x)$ is bounded for all prefixes of an
infinite sequence $\omega$. It is shown by J.~Miller in \cite{miller}
that this definition is equivalent to Martin-L\"of randomness
relativized to $\mathbf{0}'$ (called also $2$-\emph{randomness}):

\begin{theorem}
        \label{kolmogorovrandomness}
A sequence $\omega$ is Martin-L\"of $\mathbf{0}'$-random if and
only if the quantities $\bar d(x)$ for all prefixes $x$ of $\omega$
are bounded by a \textup(common\textup) constant.
\end{theorem}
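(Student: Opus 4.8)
The plan is to recast the statement as an identity between two subsets of $\Omega$ and then feed one inclusion into Theorem~\ref{opensetinterior}. For a constant $c$ let $B_c=\bigcup\{\Omega_x \mid \bar d(x)>c\}$, the set of sequences having a prefix all of whose extensions are deficient. Since $\bar d(x')\le\bar d(x)$ whenever $x'$ is a prefix of $x$ (the infimum is then taken over a larger family), the minimal strings with $\bar d>c$ form an antichain and $B_c$ is a disjoint union of the corresponding intervals. Counting, at a fixed length $L$, the extensions of these minimal strings — they are pairwise distinct and all satisfy $\KS(\cdot)<L-c$, of which there are fewer than $2^{L-c}$ — gives $\mu(B_c)\le 2^{-c}$. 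The theorem is then equivalent to the equality $\bigcap_c B_c=\{\omega\mid\omega\text{ is not }\mathbf{0}'\text{-random}\}$, and I would prove the two inclusions separately.

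For the forward direction (contrapositive: $\bar d$ unbounded along $\omega$ implies $\omega$ not $\mathbf{0}'$-random), I would exhibit $\bigcap_c B_c$ as a $\mathbf{0}'$-effective null set. The obstruction is that $B_c$ is only $\Pi^0_2$-open: the condition ``every extension of $x$ is deficient'' is a universal quantifier over an enumerable relation, one alternation too many — exactly the situation Theorem~\ref{opensetinterior} is designed for. I would define, for each stage $n$, the clopen set $U_n$ consisting of those $\Omega_x$ all of whose extensions $y$ of length at most $n$ satisfy $\KS(y\mid n)<|y|-c$; the same level count bounds $\mu(U_n)\le 2^{-c}$, and the family is uniformly enumerable. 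Using $\limsup_n\KS(\cdot\mid n)=\KS^{\mathbf{0}'}(\cdot)+O(1)$ (Theorem~\ref{plain}) one checks that a prefix whose entire subtree is genuinely deficient forces $\Omega_x\subseteq U_n$ for all large $n$, so that $B_c\subseteq\bigcup_N\mathrm{Int}(\bigcap_{n\ge N}U_n)$. Theorem~\ref{opensetinterior} then supplies a $\mathbf{0}'$-effectively open cover of measure $\le 2^{-c}$, uniformly in $c$; this is a $\mathbf{0}'$-Martin-L\"of test, and every $\omega\in\bigcap_cB_c$ lies in its intersection, hence is not $\mathbf{0}'$-random.

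For the converse — a non-$\mathbf{0}'$-random $\omega$ has unbounded $\bar d$ along its prefixes — I would start from the universal $\mathbf{0}'$-Martin-L\"of test $V_0\supseteq V_1\supseteq\cdots$ with $\mu(V_k)\le 2^{-k}$, so that $\omega\in\bigcap_kV_k$. As $V_k$ is open and contains $\omega$, some prefix $x$ of $\omega$ has $\Omega_x\subseteq V_k$. It then remains to turn membership of a whole interval in a small $\mathbf{0}'$-open set into an unrelativized upper bound on the plain complexity of every extension of $x$: enumerating $V_k$ through finite oracle approximations assigns to each covered string a description of length about its own length minus $k$, and routing these through the conditional complexities $\KS(\cdot\mid n)$ and the $\limsup$ identity yields $\KS(y)<|y|-(k-O(1))$ for all extensions $y$ of $x$, i.e. $\bar d(x)>k-O(1)$.

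The main obstacle is the first inclusion: reconciling the per-stage plain-complexity approximations, whose confirmation times need not synchronize with the enumeration clock, with the requirement that $\Omega_x$ sit inside $U_n$ for all large $n$. This is precisely the $\exists\forall$-over-enumerable difficulty analyzed after Theorem~\ref{plain}, and the role of Theorem~\ref{opensetinterior} is to absorb it: passing to the interior of the $\liminf$ of the $U_n$ is what lets the genuine (limit) deficiency of the subtree below $x$ override the unsynchronized finite-stage behaviour.
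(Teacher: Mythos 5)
Your forward direction follows the paper's own proof in all essentials: deficiency-based stage sets $U_n$ of measure at most $2^{-c}$, the observation that $\bar d(x)>c$ places $\Omega_x$ inside the tails $\bigcap_{n\ge N}U_n$ and hence inside the interior of the $\liminf$, then Theorem~\ref{opensetinterior} applied uniformly in $c$. One citation there is off, though: Theorem~\ref{plain} gives, for each \emph{fixed} extension $y$, that $\KS(y\mid n)$ is small for all sufficiently large $n$, with a threshold depending on $y$; your $U_n$ demands the bound simultaneously for all extensions of length up to $n$, and the per-$y$ thresholds need not be bounded. What you actually need (and all you need) is the uniform inequality $\KS(y\mid n)\le \KS(y)+O(1)$, which gives $\Omega_x\subseteq U_n$ for \emph{every} $n\ge|x|$ once $\bar d(x)>c+O(1)$; the paper sidesteps even this by defining $U_n$ from the unconditional complexity of strings of length exactly $n$. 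These are repairable slips.

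The genuine gap is in the converse. The sentence ``enumerating $V_k$ through finite oracle approximations assigns to each covered string a description of length about its own length minus $k$'' is precisely the hard content of Miller's theorem, and you assert it rather than prove it. Two concrete obstacles stand in the way. First, the $\mathbf{0}_n$-approximations to a $\mathbf{0}'$-effectively open set are not monotone: intervals appear and are later retracted, so the total measure of everything ever enumerated across all stages can be unbounded, and a naive assignment of indices yields descriptions that are too long; the paper prunes each stage (intervals of size at least $2^{-n}$, only indices $i\le n$, stage measure trimmed below $2^{-c}$) and relies on stabilization so that the true intervals eventually survive the pruning. Second, and fatally for your stated route, ``routing through the conditional complexities $\KS(\cdot\mid n)$ and the limsup identity'' moves in the wrong direction: Theorem~\ref{plain} converts $\mathbf{0}'$-relativized bounds into bounds on $\KS(y\mid n)$ for large $n$, but $\bar d$ is defined from \emph{unconditional} plain complexity, and discharging the condition $n$ costs roughly $\KS(n)$, which is unbounded; relativized smallness does not imply plain smallness (prefixes of a $\mathbf{0}'$-computable Martin-L\"of random sequence have logarithmic $\KS^{\mathbf{0}'}$ but nearly maximal $\KS$). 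Miller's device, reproduced in the paper, is to synchronize the approximation stage with the string length: a covered string $y$ of length $n$ is described by its ordinal number in the \emph{decidable} set of length-$n$ strings covered at stage $n$ --- a string of length $n-c$ which, given $c$, also encodes $n$ --- yielding the unconditional bound $\KS(y)\le n-c+2\log c+O(1)$ and hence $\bar d(u)\ge c-2\log c-O(1)$ for a suitable prefix $u$ of $\omega$. Without this synchronization-and-counting argument your converse does not close.
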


In turns out that this result (in one direction) easily follows
from the previous theorem.

\begin{proof}
Assume that $\bar d$-deficiencies for prefixes of
$\omega$ are not bounded. According to Martin-L\"of definition,
we have to construct for a given $c$ an $\mathbf{0}'$-effectively
open set
that covers $\omega$ and has measure at most $2^{-c}$.

Fix some $c$. For each $n$ consider the set $D_n$ of all
sequences $u$ of length $n$ such that $\KS(u)<n-c$ (i.e.,
sequences $u$ of length~$n$ such that $d(u)>c$). It has at most
$2^{n-c}$ elements. The requirement $\bar d(x)>c$ means that
every string extension $y$ of $x$ belongs to $D_m$ where $m$ is
its length. This implies that $\Omega_x$ is contained in every
$U_m$ where $m\ge |x|$ and $U_m$ is the set of all sequences
that have prefixes in $D_m$ (this set has measure at most
$2^{-c}$). Therefore, in this case the interval $\Omega_x$ is a
subset of $\bigcap_{m\ge |x|} U_m$ and (being open) is a subset of
its interior. Then we conclude (using the result proved above)
that $\Omega_x$ (=every sequence with prefix $x$) is covered by an
$\mathbf{0}'$-effectively open set of measure at most $2^{-c}$
constructed as explained above. So if some $\omega$ has prefixes
of arbitrarily large $\bar d$-deficiency, then $\omega$ is not
$\mathbf{0}'$ Martin-L\"of random.

Note that this argument works also for conditional complexity
(with length as condition) and gives a slightly stronger result.

For the sake of completeness we reproduce (from~\cite{miller})
the proof of the reverse implication (essentially unchanged).
Assume that a sequence $\omega$ is covered (for each $c$) by a
$\mathbf{0}'$-computable sequence of intervals $I_0,I_1,\ldots$
of total measure at most $2^{-c}$. (We omit $c$ in our notation,
but all these constructions depend on~$c$.)

Using the approximations $\mathbf{0}_n$ instead of full
$\mathbf{0}'$ and performing at most $n$ steps of computation
for each $n$ we get another (now computable) family of intervals
$I_{n,0},I_{n,1},\ldots$ such that $I_{n,i}=I_i$ for every $i$
and sufficiently large $n$. We may assume without loss of
generality that $I_{n,i}$ either has size at least $2^{-n}$
(i.e., is determined by a string of length at most~$n$) or
equals $\bot$ (a special value that denotes the empty set) since
only the limit behavior is prescribed. Moreover, we may also
assume that $I_{n,i}=\bot$ for $i>n$ and that the total measure
of all $I_{n,0},I_{n,1},\ldots$ does not exceed $2^{-c}$ for
every $n$ (by deleting the excessive intervals in this order;
the stabilization guarantees that all limit intervals will be
eventually let through).

Since $I_{n,i}$ is defined by intervals of size at least
$2^{-n}$, we get at most $2^{n-c}$ strings of length $n$ covered
by intervals $I_{n,i}$ for given $n$ and all $i$. This set is
decidable (recall that only $i$ not exceeding $n$ are used),
therefore each string in this set can be defined (assuming $c$
is known) by a string of length $n-c$, binary representation of
its ordinal number in this set. (Note that this string also
determines~$n$ if $c$ is known.)

Returning to the sequence~$\omega$, we note that it is covered by
some $I_i$ and therefore is covered by $I_{n,i}$ for this $i$
and all sufficiently large $n$ (after the value is stabilized),
say, for all $n\ge N$. Let $u$ be a prefix of $\omega$ of length
$N$. All continuations of $u$ of any length $n$ are covered by
$I_{n,i}$ and have complexity less than $n-c+O(1)$. In fact,
this is a conditional complexity with condition $c$; we get
$n-c+2\log c+O(1)$, so $\bar d(u)\ge c - 2 \log c -O(1)$.

Such a string $u$ can be found for every $c$, therefore $\omega$
has prefixes of arbitrarily large $\bar d$-deficiency.
\end{proof}

In fact a stronger statement than
Theorem~\ref{kolmogorovrandomness} is proved
in~\cite{miller,nies}; our tools are still too weak to get this
statement. However, the low basis theorem helps.

\section{The low basis theorem}

This is a classical result in recursion theory (see,
e.g.,~\cite{lowbasis}). It was used in~\cite{nies} to prove
$2$-randomness criterion; analyzing this proof, we get theorems
about limit complexities as byproducts. For the sake of
completeness we reproduce the statement and the proof of
low-basis theorem here; they are quite simple.

\begin{theorem}
        \label{lowbasis}
Let $U\subset\Omega$ be an effectively open set that does not
coincide with $\Omega$. Then there exists a sequence
$\omega\notin U$ which is low, i.e., $\omega'=\mathbf{0}'$
\end{theorem}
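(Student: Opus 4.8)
The plan is to prove the low basis theorem by constructing $\omega$ via a forcing-style argument that builds a decreasing sequence of effectively open sets, using $\mathbf{0}'$ to decide, at each stage, whether the $e$-th Turing functional halts on input $e$ when run with any oracle drawn from the current set of candidates. The key idea is that we can keep the set of candidate sequences ``large'' (i.e., never empty, in fact containing an interval) while forcing the jump $\omega'$ to be decidable from $\mathbf{0}'$.

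First I would set up a computable enumeration $V_0 \supseteq V_1 \supseteq V_2 \supseteq \cdots$ of effectively open sets whose complements are nonempty, starting with $V_0 = U$ (whose complement is nonempty by hypothesis, since $U \neq \Omega$). At stage $e$ I consider the $e$-th oracle Turing machine and ask whether there is some effectively open $W$ such that $\Omega \setminus (V_e \cup W)$ is still nonempty and such that the machine with oracle $e$ fails to halt on input $e$ for \emph{every} sequence in $\Omega \setminus (V_e \cup W)$. Concretely, I would let $V_{e+1}$ be $V_e$ together with the (effectively open) set of all sequences $\alpha$ for which machine $e$ \emph{does} halt on input $e$ relative to $\alpha$ within finitely many steps using only a finite prefix of $\alpha$ --- provided adding this set leaves a nonempty complement; otherwise set $V_{e+1} = V_e$. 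The crucial observation is that in either case it becomes decidable relative to $\mathbf{0}'$ which alternative occurred, and hence whether the $e$-th machine halts on $e$ for the eventual $\omega$.

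Next I would verify that $\Omega \setminus V_e$ is nonempty for every $e$ by induction: the ``effectively open set of halting-oracles'' added at stage $e$ is exactly the set where machine $e$ halts, and we only add it when its removal leaves something behind, so the complement stays nonempty. Since the $V_e$ are nested effectively open sets with nonempty complements, their union is an effectively open set $V_\infty$ whose complement is nonempty (one can pick $\omega$ in the complement using $\mathbf{0}'$ by always steering into a subinterval disjoint from the finitely much of $V_e$ enumerated so far, a standard compactness/König's lemma argument). Then $\omega \notin U = V_0$, and for each $e$ the membership ``$e \in \omega'$'' is settled by the stage-$e$ decision, which was $\mathbf{0}'$-computable; collecting these gives $\omega' \le_T \mathbf{0}'$. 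Combined with the trivial $\mathbf{0}' \le_T \omega'$ this yields $\omega' = \mathbf{0}'$.

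The main obstacle I anticipate is making the stage-$e$ decision genuinely $\mathbf{0}'$-computable and uniform: the question ``does there exist an effectively open $W$ extending $V_e$ with nonempty complement inside which machine $e$ never halts'' is, on its face, a $\Sigma$-over-$\Pi$ question that looks like it needs more than one jump. The resolution is to phrase it correctly --- by compactness, $\Omega \setminus V_e$ is a closed (hence compact) set, and ``machine $e$ halts on $e$ for every point of $\Omega \setminus V_e$'' is then an \emph{open} condition that becomes true after finitely many enumeration steps, so it is $\mathbf{0}'$-decidable. I would take care to present this compactness step cleanly, since it is exactly what collapses the apparent quantifier complexity down to a single $\mathbf{0}'$ query and is the heart of why the basis element lands in the low degrees.
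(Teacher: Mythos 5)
Your proposal is correct and is essentially the paper's own argument: you force the jump one step at a time by adjoining the effectively open set $T_e$ of oracles on which machine $e$ halts whenever the complement stays nonempty, use compactness to make the question ``does $V_e\cup T_e$ cover $\Omega$?'' a single $\mathbf{0}'$ query (the paper's key step too, phrased there for all pairs $(M,x)$ rather than the diagonal, an inessential difference), and get $\omega$ from the nested nonempty closed complements by compactness. Two cosmetic slips only: your chain should read $V_0\subseteq V_1\subseteq\cdots$ since the open sets grow, and the ``steering into a subinterval disjoint from what has been enumerated so far'' aside is not quite right as stated (later enumeration could swallow the chosen interval; one would instead ask $\mathbf{0}'$ whether the complement restricted to a subinterval is nonempty) --- but this aside is unneeded, since the compactness argument you also invoke already yields existence.
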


Here $\omega'$ is the jump of $\omega$; the equation
$\omega'=\mathbf{0}'$ means that the universal $\omega$-enumerable
set is $\mathbf{0}'$-decidable.

Theorem~\ref{lowbasis}
says that any effectively closed non-empty set contains
a low element. For example, if $P,Q\subset\mathbb{N}$ are
enumerable inseparable sets, then the set of all separating
sequences is an effectively closed set that does not contain
computable sequences. We conclude, therefore, that there exists a
non-computable low separating sequence.

\begin{proof}
Assume that an oracle machine $M$ and an input $x$ are
fixed. The computation of $M$ with oracle $\omega$ on $x$ may
terminate or not depending on oracle $\omega$. Let us consider
the set $T(M,x)$ of all $\omega$ such that $M^\omega(x)$
terminates (for fixed machine $M$ and input $x$). This set is an
effectively open set (if termination happens, it happens due to
finitely many oracle values). This set together with $U$ may
cover the entire~$\Omega$; this means that $M^\omega(x)$
\emph{terminates for all $\omega\notin U$}. If it is not the
case, we can add $T(M,x)$ to $U$ and get a bigger effectively
open set $U'$ that still has non-empty complement such that
$M^\omega(x)$ \emph{does not terminate for all $\omega\in U'$}.
This operation guarantees (in one of two ways) that termination
of the computation $M^\omega(x)$ does not depend on the choice
of $\omega$ (in the remaining non-empty effectively closed set).

This operation can be performed for all pairs $(M,x)$
sequentially. Note that if $U\cup T(M,x)$ covers the entire $\Omega$,
this happens on some finite stage (compactness), so
$\mathbf{0}'$ is enough to find out whether it happens or not,
and on the next step we have again some effectively open
(without any oracle) set. So $\mathbf{0}'$-oracle is enough to
say which of the computations $M^\omega(x)$ terminate (as we
have said, this does not depend of the choice of $\omega$).
Therefore any such $\omega$ is low (the universal
$\omega$-enumerable set is $\mathbf{0}'$-decidable). And such an
$\omega$ exists since the intersection of the decreasing
sequence of non-empty closed sets is non-empty (compactness).
\end{proof}

\section{Using the low basis theorem}

Let us show how Theorem~\ref{plain} can be proved using the low
basis theorem. As we have seen, we have an enumerable family of
sets $U_n$ that have at most $2^k$ elements and need to
construct effectively a $\mathbf{0}'$-enumerable set that has at
most $2^k$ elements and contains $U_\infty=\liminf_n U_n$.

If the sets $U_n$ are (uniformly) decidable, then $U_\infty$ is
$\mathbf{0}'$-enumerable and we do not need any other set. The low
basis theorem allows us to reduce general case to this special
one. Let us consider the family of all ``upper bounds'' for
$U_n$: by an upper bound we mean a sequence $V_n$ of finite sets
that contain $U_n$ and still have at most $2^k$ elements each.
The sequence $V_0,V_1,\ldots$ can be encoded as an infinite binary
sequence (first we encode $V_0$, then $V_1$ etc.; note that each
$V_i$ can be encoded by a finite number of bits though this
number depends on $V_i$).

For a binary sequence the property ``to be an encoding of an
upper bound for $U_n$'' is effectively closed (the restriction
$\#V_n < 2^k$ is decidable and the restriction $U_n \subset V_n$
is co-enumerable). Therefore the low basis theorem can be applied.
We get an upper bound $V$ that is low. Then $V_\infty=\liminf
V_n$ is (uniformly in $k$) $V'$-enumerable (as we have said:
with $V$-oracle the family $V_n$ is uniformly decidable), but
since $V$ is low, $V'$-oracle can be replaced by
$\mathbf{0}'$-oracle, and we get the desired result.

This proof though being simple looks rather mysterious: we get
something almost out of nothing! (As far as we know, this idea
in a more advanced context appeared in~\cite{nies}.)

The same trick can be used to prove Theorem~\ref{apriori}: here
``upper bounds'' are distributions $M_n$ with rational values
and finite support that are greater than $m(x|n)$ but still are
semimeasures. (Technical correction: first we have to assume
that $m(x|n)=0$ if $x$ is large, and then we have to weaken the
restriction $\sum M_n(x)\le 1$ replacing $1$ by, say, $2$; this
is needed since the values $m(x|n)$ may be irrational.)

Theorem~\ref{aprioritree} can be also proved in this way (upper
bounds should be semimeasures on tree with rational values and
finite support).

As to Theorem~\ref{opensetinterior}, here the application of the
low basis theorem allows us to get a stronger result than before
(though not the most strong version we mentioned as an open question):

\begin{theorem}
        \label{opensetinteriorstrong}
Let $\varepsilon>0$ be a rational number and let $U_n$ be an
uniformly enumerable family of effectively open sets, i.e.,
        $$
U_n=\cup \{\Omega_x \mid (n,x)\in U\}
        $$
for some enumerable set $U\subset\mathbb{N}\times\{0,1\}^*$.
Assume that $U_n$ has measure at most $\varepsilon$ for
every~$n$. Assume also that $U_i$ has ``effectively bounded
granularity'', i.e., all strings $x$ such that $(n,x)\in U$ have
length at most $c(n)$ where $c$ is a total computable function.
Then for every $\varepsilon'>\varepsilon$ there exists a
$\mathbf{0}'$-effectively open set $W$ of measure at most
$\varepsilon'$ that contains
        $$
\liminf_{n\to\infty} U_n=\bigcup_{N}\bigcap_{n\ge N} U_n
        $$
and this construction is uniform.
\end{theorem}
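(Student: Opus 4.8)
The plan is to mimic the low-basis proof of Theorem~\ref{plain}, replacing finite sets of bounded cardinality by effectively open sets of bounded measure. The key new ingredient, supplied by the ``effectively bounded granularity'' hypothesis, is that each $U_n$ is determined by finitely many strings of length at most $c(n)$, so an upper bound for $U_n$ can be encoded by a finite amount of information. Concretely, I would fix a rational $\delta>0$ with $\varepsilon+\delta\le\varepsilon'$ and consider, for each $n$, the \emph{upper bounds} of $U_n$: effectively open sets $V_n$ that are finite unions of dyadic intervals $\Omega_x$ with $|x|\le c(n)$, that contain $U_n$, and whose measure does not exceed $\varepsilon+\delta$. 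Since the granularity bound $c(n)$ is computable, each such $V_n$ ranges over a finite (computably bounded) set of possibilities, so the whole sequence $V_0,V_1,\ldots$ can be encoded as an infinite binary sequence, each $V_n$ occupying a block of bits whose length depends computably on~$n$.

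The next step is to check that ``to be (a code of) an upper bound for the family $U_n$'' is an effectively \emph{closed} condition on this binary sequence. The cardinality-style restrictions here are: the measure constraint $\mu(V_n)\le\varepsilon+\delta$, which is decidable once $V_n$ is given as a finite union of intervals of bounded length; the inclusion $U_n\subseteq V_n$, which is co-enumerable because $U_n$ is enumerable and membership of a generated interval $\Omega_x$ (with $x$ of length at most $c(n)$) in the finite union $V_n$ is decidable; and the syntactic well-formedness of each block, which is decidable. Intersecting these closed conditions over all~$n$ gives an effectively closed set of codes, and it is nonempty because $V_n=U_n$ itself is always a valid upper bound (its measure is at most $\varepsilon\le\varepsilon+\delta$, and by granularity it is a finite union of intervals of length $\le c(n)$). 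Hence the low basis theorem (Theorem~\ref{lowbasis}) applies and yields a \emph{low} sequence $V=(V_n)$ coding an upper bound.

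Having fixed this low upper bound $V$, I would form $V_\infty=\liminf_n V_n=\bigcup_N\bigcap_{n\ge N}V_n$ and take $W=V_\infty$. Relative to the oracle $V$ the sequence $V_n$ is uniformly decidable (each $V_n$ is just read off its block), so $V_\infty$ is $V'$-effectively open; since $V$ is low we have $V'\equiv\mathbf{0}'$, and therefore $W$ is $\mathbf{0}'$-effectively open, uniformly in the data. For the measure bound, each $V_N\supseteq\bigcap_{n\ge N}V_n$ has measure at most $\varepsilon+\delta$, and the increasing union $\bigcup_N\bigcap_{n\ge N}V_n$ has measure equal to the limit of the measures of $\bigcap_{n\ge N}V_n$, which is at most $\varepsilon+\delta\le\varepsilon'$. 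Finally $W$ contains $\liminf_n U_n$ because $U_n\subseteq V_n$ for every $n$, so any point lying in almost all $U_n$ lies in almost all $V_n$ as well.

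The main obstacle I anticipate is not the low-basis application itself but setting up the encoding so that the relevant conditions are genuinely closed (not merely $\Pi^0_2$). This is exactly where the granularity hypothesis is essential: without an a~priori computable bound $c(n)$ on the lengths of strings defining $U_n$, one could not bound the bits needed to encode an upper bound $V_n$, the set of candidate codes would not be a compact effectively closed subset of Cantor space, and the inclusion $U_n\subseteq V_n$ would stop being co-enumerable in a uniform way. I would therefore take care to verify that, thanks to $c$, every constraint reduces to a decidable or co-enumerable predicate on finite initial segments of the code, so that compactness delivers both the nonemptiness of the intersection and the fact that failures of inclusion or of the measure bound are witnessed at a finite stage.
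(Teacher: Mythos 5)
Your first half (the encoding of upper bounds $V_n$ as finite unions of intervals of length at most $c(n)$, the verification that the constraints are effectively closed, and the application of the low basis theorem to get a low $V$) matches the paper's reduction essentially step for step. But the final step contains a genuine gap: you set $W=V_\infty=\liminf_n V_n$ and claim this set is $V'$-effectively open. It is not open at all, relative to any oracle. Each $V_n$ is clopen, so $\bigcap_{n\ge N}V_n$ is closed but in general not open, and $\liminf_n V_n$ is a countable union of closed sets (a $\Sigma^0_2$ class), which need not be open; relativization cannot help, since openness is a topological property, not a definability one. Concretely, if $V_n$ is the interval determined by the first $n$ bits of a fixed sequence $\omega$ (measures shrinking, so all constraints hold), then $\liminf_n V_n=\{\omega\}$, which no oracle makes effectively open. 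The intuition imported from Theorem~\ref{plain} breaks here: there the objects were finite strings and ``$\liminf$ of uniformly decidable sets is $\mathbf{0}'$-enumerable'' sufficed, but for subsets of $\Omega$ one must still produce an \emph{open} cover of the $\liminf$, and that is a separate task. A telltale sign is that your argument never genuinely uses $\varepsilon'>\varepsilon$ (you spend the slack $\delta$ on loosening the measure constraint for the $V_n$, which is unnecessary since $V_n=U_n$ already satisfies $\mu(V_n)\le\varepsilon$), whereas the theorem's hypothesis $\varepsilon'>\varepsilon$ is exactly the price of covering closed sets by open ones.

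The missing idea, which is the second half of the paper's proof, is the following. After the low-basis reduction one may assume the family is decidable with all generating strings of length exactly $c(n)$, so each $U_n$ is clopen. Write the $\liminf$ as a \emph{disjoint} union of effectively closed sets $F_i$, where $F_0=\bigcap_n U_n$ and $F_i=\bigcap_{n\ge i}U_n\setminus U_{i-1}$ (classify each point by the last $U_n$ missing it); disjointness gives $\sum_i\mu(F_i)=\mu(\liminf_n U_n)\le\varepsilon$. The measure of each effectively closed $F_i$ is $\mathbf{0}'$-computable (it is the limit of the computable decreasing measures of the clopen approximations $\bigcap_{i\le n\le m}U_n\setminus U_{i-1}$), so with the $\mathbf{0}'$-oracle one can find a finite union of intervals covering $F_i$ of measure at most $\mu(F_i)+(\varepsilon'-\varepsilon)/2^{i+1}$. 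The union of all these covers is the desired $\mathbf{0}'$-effectively open $W$ of measure at most $\varepsilon'$. Your construction is salvageable exactly by appending this decomposition-and-covering step after your (correct) low-basis reduction, in place of taking $W=V_\infty$ itself.
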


\begin{proof}
First we use the low basis theorem to reduce the general
case to the case where $U$ is decidable and for every $(n,x)\in U$
the length of $x$ is exactly $c(n)$.

Indeed, define an ``upper bound'' as a sequence $V$ of sets
$V_n$ where $V_n$ is a set of strings of length $c(n)$ such that
$U_n$ is covered by the intervals generated by elements of
$V_n$. Again $V$ can be encoded as an infinite sequence of zeros
and ones, and the property ``to be an upper bound'' is
effectively closed. Applying the low basis theorem, we choose a low
$V$ and add it is an oracle. Since $V'$ is equivalent to $\mathbf{0}'$,
for our purpose we may assume that $V$ is decidable.

Now we have to deal with the decidable case. Let us represent
the set $U_\infty$ as a union of the disjoint sets
        $$
F_0=\bigcap_i U_i,\
F_1=\bigcap_{i\ge 1} U_i \setminus U_0,\
F_2=\bigcap_{i\ge 2} U_i \setminus U_1,\ldots
        $$
(for each element $x$ in $U_\infty$ we consider the last $U_i$
that does not contain~$x$). Each of $F_i$ is (in the decidable
case) an effectively closed set (recall than $U_i$ is
open-closed due to the restriction on $c(i)$).
Moreover, the $F_i$ are pairwise disjoint and the family $F_i$
satisfies
        $$
\liminf_{n\to+\infty}\, U_n = \bigcup_i F_i
        $$
and thus
        $$
\sum_i \mu(F_i)=\mu(\liminf_{n\to+\infty}\, U_n).
        $$
The measure of
each of $F_i$ is $\mathbf{0}'$-computable, and using
$\mathbf{0}'$-oracle we can find a finite set of intervals that
covers $F_i$ and has measure
        $$
\mu(F_i)+(\varepsilon'-\varepsilon)/2^{i+1}
        $$
Putting
all these intervals together, we get the desired set $W$. So the
decidable case (and therefore the general one, thanks to low
basis theorem) is completed.
\end{proof}

\section{Corollary on 2-randomness}

Theorem~\ref{opensetinteriorstrong} can be used to prove
$2$-randomness criterion from~\cite{miller,nies}. In fact, this
gives exactly the proof from~\cite{nies}; the only thing we did
is structuring the proof in two parts (formulating
Theorem~\ref{opensetinteriorstrong} explicitly and putting it in
the context of other results on limits of complexities).

\begin{theorem}[\cite{miller,nies}]
        \label{miller-nies}
A sequence $\omega$ is $\mathbf{0}'$ Martin-L\"of random if and only if
        $$
\KS(\omega_0\omega_1\ldots\omega_{n-1})\ge n - c
        $$
for some $c$ and for infinitely many~$n$.
\end{theorem}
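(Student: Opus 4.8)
The plan is to restate the complexity condition in terms of the randomness deficiency $d(x) = |x| - \KS(x)$ and to prove both implications in contrapositive form, reusing the two arguments already developed in the paper. First I would note that ``$\KS(\omega_0\ldots\omega_{n-1}) \ge n-c$ for some $c$ and infinitely many $n$'' is precisely the statement that $\liminf_{n} d(\omega_0\ldots\omega_{n-1})$ is finite; its negation is $d(\omega_0\ldots\omega_{n-1}) \to \infty$. This reduces the theorem to two contrapositive claims: (a) if $d(\omega_0\ldots\omega_{n-1}) \to \infty$ then $\omega$ is not $\mathbf{0}'$-random, and (b) if $\omega$ is not $\mathbf{0}'$-random then $d(\omega_0\ldots\omega_{n-1}) \to \infty$.

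For claim (a) I would fix $c$ and, exactly as in the proof of Theorem~\ref{kolmogorovrandomness}, form the uniformly enumerable open sets $U_n^{(c)} = \bigcup \{\Omega_u \mid |u| = n,\ \KS(u) < n-c\}$. Each has measure at most $2^{-c}$ and they have effectively bounded granularity, since every generating string has length exactly $n$ (so $c(n) = n$ is computable). The hypothesis $d \to \infty$ means that for each $c$ we have $\KS(\omega_0\ldots\omega_{n-1}) < n-c$ for all large $n$, i.e.\ $\omega \in \liminf_{n} U_n^{(c)}$. Applying Theorem~\ref{opensetinteriorstrong} uniformly in $c$ produces a $\mathbf{0}'$-effectively open set $W_c \supseteq \liminf_{n} U_n^{(c)}$ of measure at most $2^{-(c-1)}$; after a harmless reindexing the family $\{W_c\}$ is a $\mathbf{0}'$-Martin-L\"of test, and since $\omega \in W_c$ for every $c$, the sequence $\omega$ is not $\mathbf{0}'$-random. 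The essential gain over Theorem~\ref{kolmogorovrandomness} is that Theorem~\ref{opensetinteriorstrong} covers the full $\liminf$ rather than only its interior, so it suffices that the single sequence $\omega$ (and not a whole interval around it) lie in $\liminf_n U_n^{(c)}$; this is exactly what turns the $\bar d$-criterion into the stronger $d$-criterion.

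For claim (b) I would reuse the approximation/stabilization argument from the reverse implication of Theorem~\ref{kolmogorovrandomness}. Assuming $\omega$ is not $\mathbf{0}'$-random, for each $c$ it is covered by a $\mathbf{0}'$-computable sequence of intervals of total measure at most $2^{-c}$; replacing $\mathbf{0}'$ by its finite approximations $\mathbf{0}_n$ and running $n$ steps gives a computable array $I_{n,i}$ that stabilizes to $I_i$, normalized so each interval has size at least $2^{-n}$ or equals $\bot$, with $I_{n,i} = \bot$ for $i > n$ and total measure at most $2^{-c}$. The length-$n$ strings covered then form a decidable set of at most $2^{n-c}$ elements. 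Since $\omega \in I_i = I_{n,i}$ for all large $n$ and this interval is determined by a string of length at most $n$, the prefix $\omega_0\ldots\omega_{n-1}$ itself is one of those covered strings, so encoding it by its index gives $\KS(\omega_0\ldots\omega_{n-1}) \le n - c + O(\log c)$, whence $d(\omega_0\ldots\omega_{n-1}) \ge c - O(\log c)$ for all large $n$. Letting $c \to \infty$ yields $d(\omega_0\ldots\omega_{n-1}) \to \infty$, as required.

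I expect the difficulty to lie in the bookkeeping rather than in any new idea. In claim (a) the delicate point is to verify honestly that the granularity hypothesis of Theorem~\ref{opensetinteriorstrong} holds and that the measures and the uniformity assemble into a genuine $\mathbf{0}'$-test; in claim (b) one must track the logarithmic overhead from encoding $c$, which is harmless because $c$ is a constant in each application and $c - O(\log c) \to \infty$. The one conceptual step worth isolating is the observation, used in claim (b), that it is the prefix of length exactly $n$ --- not merely all of its continuations --- that falls into the covered set; together with the $\liminf$ form of the open-set theorem used in claim (a), this is what strengthens Theorem~\ref{kolmogorovrandomness} to the present criterion.
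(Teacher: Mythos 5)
Your proposal is correct and takes essentially the same approach as the paper: your claim (a) is exactly the paper's argument (Theorem~\ref{opensetinteriorstrong} applied, uniformly in $c$, to the sets $U_n$ generated by length-$n$ strings of complexity below $n-c$, so that $c(n)=n$), and your claim (b) merely inlines, in contrapositive form, the stabilization argument that the paper invokes by citing Theorem~\ref{kolmogorovrandomness} --- you apply it to the length-$n$ prefixes of $\omega$ themselves instead of routing through the quantity $\bar d$, arriving at the same bound $d(\omega_0\ldots\omega_{n-1})\ge c-O(\log c)$. Both proofs thus rest on the same two ingredients in the same roles, and your repackaging is sound.
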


\begin{proof}
Let us first understand the relation between this
theorem and Theorem~\ref{kolmogorovrandomness}. If
        $$
\KS(\omega_0\omega_1\ldots\omega_{n-1})\ge n - c
        $$
for infinitely many $n$ and given $c$, then $\bar d(x)\le c$
for every prefix $x$ of $\omega$ (indeed, one can find the
required continuation of $x$ among prefixes of $\omega$). As we
know, this guarantees that $\omega$ is $\mathbf{0}'$
Martin-L\"of random.

It remains to prove that if for all $c$ we have
        $$
\KS(\omega_0\omega_1\ldots\omega_{n-1})< n - c
        $$
for all sufficiently large $n$, then $\omega$ is not
$\mathbf{0}'$-random. Using the same notation as in the proof of
Theorem~\ref{kolmogorovrandomness}, we can say that $\omega$ has
a prefix in $D_n$ and therefore belongs to $U_n$ for all
sufficiently large $n$. We can apply then
Theorem~\ref{opensetinteriorstrong} since $U_n$ is defined using
strings of length $n$ (so $c(n)=n$) and cover $U_\infty$ (and
therefore $\omega$) by a $\mathbf{0}'$-effectively open set of
small measure. Since this can be uniformly done for all~$c$, the
sequence $\omega$ is not $\mathbf{0}'$-random.
\end{proof}

\textbf{Remark}. The results above may be considered as special
cases of an effective version of a classical theorem in measure
theory: Fatou's lemma. This lemma guarantees that if $\int
f_n(x)\,d\mu(x)\le\varepsilon$ for $\mu$-measurable functions
$f_0,f_1,f_2,\ldots$, then
        $$
\int \liminf_{n\to+\infty} f_n (x)\,d\mu(x) \le \varepsilon.
        $$
The constructive version assumes that $f_i$ are lower
semicomputable and satisfy some additional conditions; it says
that for every $\varepsilon'>\varepsilon$ there exists a lower
$\mathbf{0}'$-semicomputable function $\varphi$ such that
$\liminf\,f_n (x)\le \varphi(x)$ for every $x$ and
$\int\varphi(x) d\mu(x)\le\varepsilon'$.

\newpage
\null

\end{document}